\newtheorem{prop}{Proposition}
\newcommand{\tA}{\tilde{A}}
\newcommand{\p}{\partial}
\begin{document}

\title{Field patterns in periodically modulated optical parametric amplifiers and oscillators}

\author{V. A. Brazhnyi$^{1}$}
\email{brazhnyi@cii.fc.ul.pt}

\author{V. V. Konotop$^{1,2,3}$}
\email{konotop@cii.fc.ul.pt}

\author{S. Coulibaly$^4$}
\email{Saliya.coulibaly@phlam.univ-lille1.fr}

\author{M. Taki$^4$}
\email{Abdelmajid.Taki@phlam.univ-lille1.fr}

\affiliation{
$^1$Centro de F\'{\i}sica Te\'orica e Computacional,
Universidade de Lisboa, Complexo Interdisciplinar, Avenida Professor Gama
Pinto 2, Lisboa 1649-003, Portugal\\
$^2$Departamento de F\'{\i}sica,
Universidade de Lisboa, Campo Grande, Ed. C8, Piso 6, Lisboa
1749-016, Portugal
\\
$^3$Departamento de Matem\'aticas, E. T. S. Ingenieros Industriales,
Universidad de Castilla-La Mancha, 13071 Ciudad Real,
Spain
\\
$^4$ Laboratoire de Physique des Lasers, Atomes et Mol\'ecules
(PHLAM), Centre d'Etudes et de Recherches Lasers et Applications
(CERLA),UMR-CNRS 8523 IRCICA, Universit\'e des Sciences et
Technologies de Lille, 59655 Villeneuve d'Ascq Cedex, France }

\pacs{42.65.Sf, 42.65.Tg}

\begin{abstract}
Spatially localized and periodic field patterns in  periodically
modulated optical parametric amplifiers and oscillators are studied.
In the degenerate case (equal signal and idler beams) we elaborate
the systematic method of construction of the stationary localized
modes in the amplifiers, study their properties and stability. We
describe a method of constructing periodic solutions in optical
parametric oscillators, by adjusting the form of the external driven
field to the given form of either signal or pump beams.
\end{abstract}

\maketitle

\section{Introduction}

{\bf
Parametric amplification together with stimulated emission are
fundamental mechanisms for generation of coherent radiation. Laser
action is based on the latter whereas the former is the basis of
optical parametric oscillations (OPOs). Even though, laser systems are
encountered worldwide thanks to their intense commercialization,
OPOs are, nowadays, more and more
spreading (and not only in laboratories) owing, essentially, to
their tunability on a wide range of frequencies. In fact, for a
laser the emitted frequency is fixed once and for all to the
manufacturing. In consequence, a pile of several lasers of different
frequencies is needed when the operating system involves a whole
range of frequencies. We thus measure the considerable advancement
that OPOs bring to modern optical technology where large frequency
variations are desired. Although the basic principles of optical
parametric amplifiers (OPAs) have been known for over 40 years,
OPOs developed quickly only in the last decade
mostly for technological reasons \cite{Piskarskas}.
The use of periodically varying media greatly enriches the diversity of observable phenomena either in OPAs or in OPOs, allowing for existence of coherent field patterns, the study of which is the main goal of the present paper.
}

Compared to lasers, OPOs have received much less attention in spite of their
strong interest both on the fundamental and on the technological
sides \cite{Byer}. We recall that these are very frequency agile coherent sources
with a wide range of possible applications including range finding,
pollution monitoring and tunable frequency generation. They are also
the key element for the production of twin photons and the
realization of fundamental quantum optics experiments
\cite{Kolobov}.

Basically an optical parametric amplifier (OPA) generates light via a
three-wave mixing process in which a nonlinear crystal subjected to
a strong radiation at frequency $\omega _{p}$ (pump beam) radiates
two coherent fields at frequencies $\omega _{s}$ (signal beam) and
$\omega _{i}$ (idler beam) such that the energy conservation law
$\omega _{p}=\omega _{s}+\omega _{i}$ is satisfied. This energy
conservation criterion may be interpreted in terms of photons where
one photon at frequency $\omega _{p}$ is converted into two photons
at frequencies $\omega_{s}$ and $\omega_{i}$. This process is most
efficient when the phase matching condition is fulfilled. It states
that optical parametric amplification is favored when the three
interacting waves keep constant relative phases along their
propagation inside the crystal. This implies ${ \vec{k}}_p={
\vec{k}}_s+{ \vec{k}}_i,$ or equivalently $n_{p}(\omega _{p})\omega
_{p}=n_{s}\left( \omega _{s}\right) \omega _{s}+n_{i}\left( \omega
_{i}\right) \omega _{i}$ where $\vec{k}_{j}$ and $n_{j}$ are the
wavevector and the refractive index at frequency $\omega _{j}$
respectively ($j=p,s,i$).

OPOs have recently appeared as physical systems whose modeling
generates specific problems sharing common grounds with very general
open questions such as those related to the appearance of complexity
in spatially extended systems \cite{Hohenberg}. In fact, OPOs have
become one of the most active fields in nonlinear optics not only
for the richness in nonlinear dynamical behaviors \cite{REFSOPO} but
also for the potential applications of OPO devices
\cite{Piskarskas}, including low-noise measurements and defection
\cite{Kimble,Longhi}. When driven by an intense external field, both
OPAs; i.e. corresponding to a single pass through the quadratic
crystal, and OPOs show a number of remarkable features among them
are the localized structures (LS) or spatial solitons
\cite{Firth,Conti}. These nonlinear solutions are generated in the
hysteresis loop involving stable homogeneous states and the
nonhomogeneous periodical branches of solutions. The latter that
initiate  the LS formation result from the modulational instability
(often called Turing instability) and have been intensively  studied
in nonlinear optical cavities \cite{JOB2004}. Stability of LSs and
effects of spatial inhomogeneities on the LS dynamics have been
addressed in Refs. \cite{Skryabin12,Fedorov}.

Very recently another type of LS in the form of
dissipative ring-shaped solitary waves have been generated in the
regime where the steady state solutions are stable with respect to
the modulational instability. Indeed, it has been shown in
\cite{Taki-Springer} that OPOs can continuously generate spatially
periodic dissipative solitons with an intrinsic wavelength. These
modulations spontaneously develop from \emph{localized }
perturbations of the \emph{unstable} homogeneous steady state that
separates the two stable states of an hysteresis cycle. This
constitutes the counterpart of Turing spontaneous modulations
initiated by \emph{extended} perturbations. They occur in the wings
of the 2D traveling flat top solitons (fronts, domain walls or
kink-antikink) and give eventually rise to ring-shaped propagating
dissipative solitons. Such non-Turing periodic dissipative solitons
have also been predicted in liquid crystal light valve nonlinear
optical cavity \cite{Durniak}.

In the more general context of optics the soliton ability to
self-confine light beam power makes it a promising candidate for
future applications in information technology \cite{Nature06} and
image processing \cite{Nature02,Science97}. In this context, an
issue of important significance is the development of a research
activity aiming at taking advantage of more sophisticated geometries
of cavities~\cite{yulin} and periodic modulations of the refractive
index~\cite{Conti,All} to generate, control, and stabilize optical
LS. The existence of quadratic gap solitons in periodic structures,
having spatial extension much bigger than the period of the
structure, and thus well described within the framework of the
multiple scale expansion, was studied in \cite{Conti}. The control
and stabilization of LS are possible since periodic transverse
modulations are known, with the advent of photonic crystals, to
profoundly affect spatial soliton properties \cite{All}.

In this paper we address the problem of the formation and the
dynamics of localized structures in presence of transverse periodic
modulations of the refractive index in both OPAs and OPOs,
concentrating on field patterns whose spatial scales are of order of
the period of the medium. This is actually the main distinction of
our statement from the previous studies exploring either slow
envelope LSs~\cite{Conti} or the tight-binding limit leading to
nonlinear lattice equations~\cite{Egorov}. Indeed, optical
parametric wave-mixing systems are one of optical devices that can
operate either as a conservative (OPAs) or a dissipative (OPOs)
system. We have analytically investigated the existence and
stability of conservative LS (or solitons) that appear in OPAs and
the dissipative periodic patterns occurring in a parametric
oscillation regime (OPOs). We have first identified the simultaneous
conditions required on index modulation together with the incident
pump for stationary LS to exist by means of Bloch waves approach. It
results from our analysis that in a degenerate configuration (the
signal and idler are identical) signal and pump fields can be found
in form of phase locked LS (real solutions). They are strongly
stable and located at the minimum of the index periodic modulation.
To extend the analytical analysis to higher nonlinear regimes we
have performed numerical simulations that allowed characterizing the
different types of stable LS that can be supported by the system
under transverse periodic modulations.

The paper is organized as follows. In Sec. II we briefly recall the
two-wave model governing the spatio-temporal evolution of the slowly
varying envelopes of the pump and the signal waves including
diffraction and transverse periodic modulation of the refractive
index. The general conditions required for the existence and
stability of conservative LS appearing in OPAs, by means of Bloch
waves approach, are given. Extension of Bloch waves approach to take
into account dissipative effects stemming from cavity losses is
carried out in Section III, for a degenerate OPO with transverse
periodic modulations, where for construction of periodic solutions we use a kind of inverse engineering allowing us to determine the external field from a given structure of the nonlinear Bloch state. Concluding remarks are summarized in the last
section.

\section{Localized modes in periodically modulated
optical parametric amplifiers}

\subsection{The model}

We start with the model describing optical parametric amplification
 where we take into account diffraction and the transverse
periodic change of the refraction index
($n=n_j^{(0)}+n_j^{(1)}(\vec{r})$, $j=0,1$). In the degenerate
configuration, the signal and idler fields are identical leading to
the resonant frequency conversion $\omega_0=2\omega_1$ where
$\omega_0$ and $\omega_1$ are the pump and signal frequencies
respectively. Considering beam propagation along the $z$-direction,
the nonlinear interaction of the two waves propagating in the
crystal is governed by the system~\cite{Tlidi}:
\begin{subequations}
\label{eq_12}
\begin{eqnarray}
\label{eq1}
    &&\partial_z \alpha_0=\frac i2\nabla^2\alpha_0+i\tilde{\varepsilon}_0(\vec{r})\alpha_0+i\alpha_1^2e^{i\Delta kz}\,,
    \\\label{eq2}
    &&\partial_z \alpha_1=i\nabla^2\alpha_1+ i\tilde{\varepsilon}_1(\vec{r})\alpha_1+2i\alpha_0 \bar{\alpha}_1 e^{-i\Delta kz}\,,
\end{eqnarray}
\end{subequations}
where $\alpha_0$, $\alpha_1$ are the envelopes of the pump and
the idler, respectively. $\vec{r}=(x,y)$, $\nabla$ stands for the
transverse gradient, and
$\tilde{\varepsilon}_j=\omega_j^2[n_j^{(1)}]^2/c^2k_j$.
Hereafter an overbar stands for the complex conjugation.

We will be interested in periodically varying refractive indexes.
Let us further specify the model considering $n_j^{(1)}$ to be independent on frequency, i.e. assuming
\begin{eqnarray}
\label{eps01}
\tilde{\varepsilon}_0(\vec{r})=2 \tilde{\varepsilon}_1(\vec{r})\equiv \tilde{\varepsilon}(\vec{r}).
\end{eqnarray}
This assumption is not essential for construction of the field
patterns, reported below, and is introduced only for the sake of
concreteness, as the approaches we will consider involve numerical
simulations. Thus, the theory developed here is straightforwardly
generalized to any relation between $\tilde{\varepsilon}_0(\vec{r})$
and $\tilde{\varepsilon}_1(\vec{r})$, provided that their spatial
periods coincide (as it happens in a typical experimental
situation).

For perfect phase-matching, $\Delta k=0$, the system (\ref{eq_12}),
(\ref{eps01}) can be written in the Hamiltonian form
$\partial_z\alpha_j=i\delta H/\delta \bar{\alpha}_j$ with the
Hamiltonian
\begin{eqnarray}
    H&=& \int  \left[-\frac 12 |\nabla \alpha_0|^2- |\nabla \alpha_1|^2 \right. \nonumber\\
    &+&\left. \tilde{\varepsilon}\left(|\alpha_0|^2+\frac 12 |\alpha_1|^2\right)+\bar{\alpha}_0 \alpha_1^2+\alpha_0\bar{\alpha}_1^2\right]d^2\vec{r} \label{H}
\end{eqnarray}
being an integral of "motion": $dH/dz=0$. Another integral is given
by the total power $P=2P_0+P_1$, where we have introduced a notation
$P_j=\int |a_j|^2d^2\vec{r}$ for the power of the $j$-th component,
i.e. $dP/dz=0$. In what follows, existence domain and properties of
non-uniform stationary solutions will be investigated.

\subsection{General properties of stationary solutions}
\label{sec:gen_prop}

For the sake of simplicity, we address in the present work the
situation of $y$-independent spatial patterns, i.e. we assume
$n_j^{(1)}\equiv n_j^{(1)}(x)$.  Without loss of generality we can
impose that $n_j^{(1)}(x)$ is a $\pi$-periodic function (this can be
always achieved by proper renormalization of the coordinate units).
Then it is convenient to  introduce a periodic function
$V(x)\equiv -2\tilde{\varepsilon}(x)$,  $V(x+\pi)=V(x)$, and to define two
linear operators
\begin{eqnarray}
\label{lin_op}
    {\cal L}_j\equiv -\frac{d^2}{dx^2}+\frac{1}{4^{j}}V(x), \quad j=1,2
\end{eqnarray}
and the respective eigenvalue problems
\begin{eqnarray}
 {\cal L}_j\varphi_{\nu q}^{(j)}(x) ={\mathcal E}_{\nu q}^{(j)}\varphi_{\nu q}^{(j)}(x).
\label{eq:1deigen}
\end{eqnarray}
Here $\varphi_{\nu q}^{(j)}(x)$ are the linear Bloch waves, with the index $\nu=0,1,2,...$ standing for the number of the allowed band and $q$ designating the wavevector in the first Brillouin zone (BZ), $q\in [-1,1]$.
In this paper we will also use the notations ${\mathcal E}_{\nu, \pm }^{(j)}$ for the lower ("$-$") and the upper ("$+$") edges of the $\nu$-th band, and $\Delta_\nu^{(j)}={\mathcal E}_{\nu+1,-}^{(j)}-{\mathcal E}_{\nu, +}^{(j)}$ for a $\nu$-th finite gap ($\Delta_0^{(j)}$ designating the semi-infinite gaps).

For a general form of $V(x)$, the introduced linear eigenvalue
problems (\ref{eq:1deigen}) represent Hill equations~\cite{Hill}
and, as it becomes clear below, play crucial role in the systematic
construction of the nonlinear localized modes (for a review see
e.g.~\cite{BK}).

Let us first investigate the existence of the different types of
solutions of the conservative system (\ref{eq_12}),
which are homogeneous in $z$-direction. Hence, we seek for non-uniform
($x$-dependent) stationary solutions of the system
(\ref{eq_12}) in the form:
\begin{eqnarray}
\label{eq3}
    \alpha_0=a_0(x)e^{-i(k_{z} z+k_{y} y)},\quad
    \alpha_1=a_1(x)e^{-i(k_{z} z+k_{y} y)/2},
\end{eqnarray}
where we have taken into account the matching conditions
$k_{z}=k_{z0}=2k_{z1}$, $k_y=k_{y0}=2k_{y1}$ and $\Delta k=0$ .
Substituting the ansatz (\ref{eq3}) in Eqs. (\ref{eq_12}) we obtain
\begin{subequations}
\label{eq_56}
\begin{eqnarray}
\label{eq5}
    &&\Omega a_0=-\frac{d^2a_0}{dx^2}   +V(x) a_0 - 2a_1^2,
    \\\label{eq6}
    &&\frac \Omega 4  a_1=-\frac{d^2a_1}{dx^2}  +\frac14 V(x)a_1 -2a_0\bar{a}_1,
\end{eqnarray}
\end{subequations}
where $\Omega=2k_{z}- k_{y}^2$ is a spectral parameter of the problem.

System (\ref{eq_56}) is considered subject to the boundary
conditions
\begin{eqnarray}
    \label{boundary}
    \lim_{x\to\pm\infty}a_0(x)=\lim_{x\to\pm\infty}a_1(x)=0
\end{eqnarray}
corresponding to localized patterns in the $x$-direction.

Now we establish several general properties of the solutions. First, multiplying (\ref{eq5}) and (\ref{eq6}) respectively by  $d\bar{a}_0/dx$ and $d\bar{a}_1/dx$, adding conjugated equations,  and integrating over $x$ we obtain a necessary condition for existence of the localized modes:
\begin{eqnarray}
    \label{cond1}
    \int_{-\infty}^{\infty}\frac{dV(x)}{dx} \left(2|a_0(x)|^2 +|a_1(x)|^2\right)dx=0.
\end{eqnarray}

Next we notice that the Hamiltonian (\ref{H}), corresponding to the effectively one-dimensional case we are interested in, can be written down in the form $H_{1D}=H_0+H_V(0)$ where
\begin{eqnarray}
    H_0=\int_{-\infty}^{\infty}\left(-\frac 12|a_{0,x}|^2-|a_{1,x}|^2+\bar{a}_0a_1^2+ a_0\bar{a}_1^2 \right)dx
\end{eqnarray}
is the Hamiltonian of the two-wave interactions in the homogeneous medium and
\begin{eqnarray}
    H_V(\zeta)=-\frac 12 \int V(x-\zeta)\left(|a_0|^2+\frac 12 |a_1|^2 \right)dx
\end{eqnarray}
describes the effect of the periodicity.

Assuming that a solution $(a_0(x), a_1(x))$ of Eq. (\ref{eq_56})  is given, we consider
its infinitesimal shift $\zeta$ in the space, i.e.  $(a_0(x-\zeta),
a_1(x-\zeta))$. For the solution to be stable, the introduced shift
must lead to increase of the energy. Thus for such a solution we
obtain the conditions~\cite{CBKAS}
\begin{eqnarray}
\label{cond2}
    \frac{dH_V(\zeta)}{d\zeta}\Bigg|_{\zeta=0}=0\quad\mbox{and}\quad \frac{d^2H_V(\zeta)}{d\zeta^2}\Bigg|_{\zeta=0}>0\,.
\end{eqnarray}
The first one is nothing but  Eq. (\ref{cond1}) obtained above
from the energetic arguments, while the second constrain acquires
specific meaning for the definite symmetry solutions and will be
considered below.

\subsection{On phases of the stationary localized solutions}

In this section we prove that spatially localized solutions of Eqs.
(\ref{eq_56}) are real. To this end, by analogy with \cite{AKS,BK},
we set $a_j(x)=\rho_j(x)e^{i\theta_j(x)}$, define $\theta\equiv
2\theta_1-\theta_0$, and rewrite (\ref{eq_56}) in the form
\begin{subequations}
\label{explicit}
\begin{eqnarray}
\label{explicit1}
&&\rho_{0,xx}-\theta_{0,x}^2\rho_0-V_\Omega\rho_0+2\rho_1^2\cos\theta=0,
\\
\label{explicit2}
&&2\theta_{0,x}\rho_{0,x}+\theta_{0,xx}\rho_0 +2\rho_1^2\sin\theta=0,
\\
\label{explicit3}
&&\rho_{1,xx}-\theta_{1,x}^2\rho_1-\frac 14 V_\Omega\rho_1+2\rho_1\rho_0\cos\theta=0,
\\
\label{explicit4}
&&2\theta_{1,x}\rho_{1,x}+\theta_{1,xx}\rho_1-2\rho_1\rho_0\sin\theta=0,
\end{eqnarray}
\end{subequations}
where $V_\Omega\equiv V(x)-\Omega$. Multiplying Eq.
(\ref{explicit4}) by $\rho_1$ and integrating  we obtain
\begin{eqnarray}
\label{explicit5}
    \frac{d\theta_1(x)}{dx}=\frac{2}{\rho_1^2(x)}\int_C^{x}\rho_1^2(x')\rho_0(x')\sin\theta(x')dx'
\end{eqnarray}
where $C$ is a constant. Substituting this last formula in Eq.
(\ref{explicit3}) and considering the limit $x\to\pm\infty$
together with the  boundary conditions (\ref{boundary}) we find out
that a necessary condition for the existence of nonsingular
solutions is that
\begin{eqnarray}
    \lim_{x\to\pm\infty}\int_C^{x}\rho_1^2(x')\rho_0(x')\sin\theta(x')dx'=0.
\end{eqnarray}
Using the definition of $\rho_j$ and $\theta_j$ this constrain can be rewritten as follows
\begin{eqnarray*}
    &&\lim_{x\to\pm\infty}\mbox{Im}\int_C^{x}\bar{a}_0(x')a_1^2(x')dx'
    \nonumber \\
    &&= \lim_{x\to\pm\infty}\mbox{Im}\int_C^{x}\left(
    -\bar{a}_0(x')\frac{d^2a_0(x')}{dx'^2} +V_\Omega(x') |a_0(x')|^2 \right) dx'
    \nonumber \\
    &&=\lim_{x\to\pm\infty}\mbox{Re} \left[
    \bar{a}_0(C)a_{0,x}(C) - \bar{a}_0(x)a_{0,x}(x)\right] \nonumber \\
    &&= |\rho_0(C)|^2\theta_{0,x}(C)    =0
\end{eqnarray*}
where we have taken into account Eq. (\ref{eq1}) and performed
integration by parts. From the last line of the above equalities we
conclude that  $\theta_0$ does not depend on $x$, and thus without
restriction of generality it can be chosen zero so that $a_0(x)$ can
be chosen real. Then Eq. (\ref{eq5}) implies that $a_1^2$ is real.
Thus one has two options: either $a_1$ is real or $a_1$ is pure
imaginary. However, one readily concludes from the system
(\ref{eq_56}) that the second option can be reduced to the first one
by changing $a_0\mapsto -a_0$. Therefore, in what follows, we will
restrict our analysis to real stationary fields $a_{0,1}(x)$.

\subsection{Symmetry of solutions in even potentials}

For the sake of concreteness and to simplify the consideration below we set $\tilde{\varepsilon}(x)$ to have a definite parity. Then recalling the second of the constrains (\ref{cond2}) we conclude that a solution which also has a definite parity, i.e. such that $a_j^2(x)=a_j^2(-x)$, and  strongly localized about $x=0$, cannot satisfy (\ref{cond2}) if $V(x)=-V(-x)$. Moreover, for strongly localized stable solutions Eq.(\ref{cond2}) is satisfied only if $x=0$ is a minimum of the periodic function $V(x)$. Thus we can conjecture that stable solutions will be obtained centered about the minimum of the potential.

Thus, we restrict further consideration to the case
\begin{eqnarray}
\label{pot}
    V(x)=V(-x),\qquad \left.\frac{d^2V(x)}{dx^2}\right|_{x=0}>0
\end{eqnarray}
and consider $a_{0,1}$ either even or odd.

As the next step we prove that among such solutions only even are allowed, or in other words we prove the following
\begin{prop}
 If $\{a_0(x),a_1(x)\}$ is a real solution of (\ref{eq_56}) with the potential (\ref{pot}), such that $a_{0,1}^2(x)=a_{0,1}^2(-x)$, then necessarily:  $a_0(x)=a_0(-x)$ and $a_1(x)=a_1(-x)$.
\end{prop}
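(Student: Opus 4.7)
My plan is to proceed by exhausting the parity possibilities for $a_0$ and $a_1$. Since the system (\ref{eq_56}) has analytic coefficients, its classical solutions are real-analytic, and the hypothesis $a_j^2(x)=a_j^2(-x)$ together with global analyticity forces the ratio $a_j(x)/a_j(-x)$ to be a single global sign $\pm 1$. Hence each of $a_0,a_1$ is either globally even or globally odd, reducing the proposition to ruling out the three non-even cases.

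The cases in which $a_0$ is odd are eliminated by simple parity bookkeeping in (\ref{eq5}). If $a_0(x)=-a_0(-x)$, then $-a_0''+(V-\Omega)a_0$ is an odd function (because $V$ is even), whereas $a_1^2\geq 0$ is sign-definite and can be odd only if it vanishes identically. This forces $a_1\equiv 0$, after which (\ref{eq5}) reduces to the Hill equation $-a_0''+V a_0=\Omega a_0$. A periodic one-dimensional Schr\"odinger operator has purely absolutely continuous spectrum on $L^2(\mathbb{R})$, so it admits no nontrivial solution compatible with (\ref{boundary}). Therefore $a_0\equiv 0$, and every nontrivial solution must have $a_0$ even.

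The remaining and most delicate case is $a_0$ even together with $a_1$ odd, since both (\ref{eq5}) and (\ref{eq6}) are now parity-consistent and a naive argument fails. My plan is to adapt the phase-amplitude analysis of Sec.~II.C to the signal field. From (\ref{eq5}) we have the identity $a_1^2=\tfrac12\bigl((V-\Omega)a_0-a_0''\bigr)$, which, evaluated at $x=0$ where $a_1(0)=0$ for an odd $a_1$, gives $a_0''(0)=(V(0)-\Omega)a_0(0)$, consistent with $a_0$ even. Taking successive derivatives of (\ref{eq5}) and (\ref{eq6}) at the origin, and using that $a_0$ being even kills all odd-order derivatives $a_0^{(2k+1)}(0)$, I expect to generate by induction on the Taylor order a hierarchy of relations that forces $a_1^{(2k+1)}(0)=0$ for every $k\geq 0$. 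Real-analyticity of ODE solutions then yields $a_1\equiv 0$, whereupon the problem collapses to the linear Hill case already treated and gives $a_0\equiv 0$ as well, completing the proof.

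The main obstacle is precisely this last inductive step: the nonlinear coupling between (\ref{eq5}) and (\ref{eq6}) makes the Taylor-coefficient algebra increasingly intricate at each order, because the quadratic forcing $a_1^2$ feeds back into the odd derivatives of $a_0$ in a nontrivial way. Should the direct induction prove unwieldy, a cleaner alternative I would try is a Sturm-Liouville argument: treat (\ref{eq6}) as the eigenequation $(-d^2/dx^2+V/4-2a_0)a_1=(\Omega/4)a_1$ with even effective potential, note that an odd decaying eigenfunction with a node at the origin cannot be the ground state, and then combine the existence of a strictly lower-lying positive even ground state with the sign-definite constraint $a_1^2\geq 0$ imposed by (\ref{eq5}) on $a_0$ to extract a contradiction.
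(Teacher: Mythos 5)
Your handling of the case of odd $a_0$ is sound and, if anything, slightly cleaner than the paper's: where you observe that $-a_0''+(V-\Omega)a_0$ would be odd while $2a_1^2\ge 0$ can be odd only if it vanishes identically, the paper instead integrates (\ref{eq5}) over the whole axis to conclude $P_1=0$; both routes give $a_1\equiv 0$ and then dispose of the resulting linear Hill equation, which has no localized solutions. The preliminary reduction to globally even or globally odd $a_j$ via analyticity is also fine.

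The genuine gap is the second half --- ruling out the configuration ($a_0$ even, $a_1$ odd) --- which you only sketch and never close. Worse, your primary plan, a Taylor-coefficient induction at the origin forcing $a_1^{(2k+1)}(0)=0$ for all $k$, cannot succeed, because there is no local obstruction at $x=0$ to this parity class. The system (\ref{eq_56}) with the even potential (\ref{pot}) is invariant under $(a_0(x),a_1(x))\mapsto(a_0(-x),-a_1(-x))$, so by uniqueness for the initial-value problem every choice of data $a_0(0)=p$, $a_0'(0)=0$, $a_1(0)=0$, $a_1'(0)=s$ generates a real-analytic local solution with $a_0$ even and $a_1$ odd; the two free parameters $p,s$ show that no amount of Taylor bookkeeping at the origin alone can force $s=0$. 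Any successful argument must inject something beyond the local recursion --- either the decay condition (\ref{boundary}) or, as the paper does, quantitative information about the vanishing order of $a_0$ extracted from (\ref{eq5}). Concretely, the paper writes $a_1=\beta_n x^{2n+1}+o(x^{2n+1})$, argues from (\ref{eq5}) that $V(0)-\Omega$ must vanish and that $a_0(x)={\cal O}(x^{4(n+1)})$, and then exhibits a mismatch of orders, ${\cal O}(x^{2n-1})$ versus ${\cal O}(x^{2n+3})$ versus ${\cal O}(x^{4n+1})$, among the three terms of (\ref{eq6}) near $x=0$ --- a balance-of-orders contradiction, not an induction killing all odd derivatives. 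Your fallback Sturm--Liouville idea (an odd decaying eigenfunction of $-d^2/dx^2+V/4-2a_0$ at eigenvalue $\Omega/4$ has a node, hence is not a ground state) stops exactly where the work begins: the existence of a lower-lying even bound state is not by itself a contradiction, and you do not say how the nonnegativity of $a_1^2$ in (\ref{eq5}) is to be played against it. As it stands, the proposal establishes only the first half of the proposition.
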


\begin{proof} First we prove that $a_0(x)$ is even. Assuming the opposite, i.e. that $a_0(x)$ is an odd function and integrating (\ref{eq5}) with respect to $x$ over the whole axis we obtains $P_1=0$.  Thus $a_1\equiv 0$ and (\ref{eq5}) becomes linear and hence does not allow for the existence of spatially localized solutions. The contradiction we have arrived to, proves the claim that $a_0(x)$ must be even.

Now we can prove that $a_1(x)$ is an even function. Again assuming that it is odd, what implies that $a_1(0)=0$, and designating $\displaystyle{\beta_n=\frac{1}{(2n+1)!}\left.\frac{d^{2n+1}a_1}{dx^{2n+1}}\right|_{x=0}\neq 0}$ the lowest nonzero derivative of $a_1(x)$ at $x=0$, such that $a_1(x)=\beta_nx^{2n+1}+o(x^{2n+1})$ as $x\to 0$, from (\ref{eq5}) we obtain that such a solution exists only if $V(0)-\Omega={\cal O}(x^2)$, and consequently from (\ref{eq5})  $a_0(x)={\cal O}(x^{4(n+1)})$. Now one can estimate the orders of all terms in (\ref{eq6}) at $x\to 0$ as ${\cal O}(x^{2n-1})$,  ${\cal O}(x^{2n+3})$, and ${\cal O}(x^{4n+1})$, i.e. (\ref{eq6}) cannot be satisfied in the vicinity of $x=0$. We thus again arrive at the contradiction with the supposition about the existence of the solution.
\end{proof}

\subsection{On construction of spatially localized modes}

Let us now turn to the asymptotics of the solutions at $|x|\to \infty$. Since in this limit, $q_j\to 0$, it follows directly from (\ref{eq6}) that $a_1(x)\to A_1(x)$ where $A_1(x)$ is a decaying to zero solution of the Hill equation~\cite{Hill}
\begin{eqnarray}
    \label{eq8}
    -\frac{d^2A_1}{dx^2}   + \frac 14 \left( V(x)-\Omega \right)A_1 =0.
\end{eqnarray}
This equation has decaying (growing solutions) only if  $\Omega/4$ belongs to a forbidden gap of the "potential" $\frac 14 V(x)$ (here we will not consider the cases where $\Omega$ coincides with an edge of a gap). This last requirement we formulate as $\Omega/4\in \Delta_{\nu'}^{(1)}$ for some positive integer $\nu'$. Then it follows form the Floquet theorem that
\begin{eqnarray}
\label{A1_sol}
A_1(x)= C_1e^{-\mu_{1}x} \phi_{1}(x)
\end{eqnarray}
where $C_1$ is a constant, $\mu_{1}$ is the corresponding Floquet exponent, defined by the frequency detuning to the gap, and $\phi_{1}(x)$ is the real $2\pi$--periodic function.

Let us now turn to (\ref{eq5}), and notice that it can be considered as a linear equation for the unknown $a_0(x)$. Taking into account, that for decaying solutions $a(x)$, $\Omega$ must belong to a gap $\Delta_{\nu''}^{(0)}$, of the potential $V(x)$, i.e. $\Omega\in\Delta_{\nu''}^{(0)}$ [we emphasize that the frequency $\Omega$ here is the same as in Eqs. (\ref{eq6}) and (\ref{eq8})], we arrive at a necessary condition for existence of the localized modes: there must exist a nonzero intersection between gaps  $\Delta_{\nu''}^{(0)}$ and $4\Delta_{\nu''}^{(1)}$ for at least some integers $\nu'$ and $\nu''$. Then, designating this intersection by $\Delta_g$, i.e. $\Delta_g=\Delta_{\nu''}^{(0)}\cap 4\Delta_{\nu'}^{(1)}$ and referring to it as a {\em total gap}, we conclude that if for a given $\Omega$ a localized mode exists, then $\Omega\in\Delta_g$.

Now one can write down:
\begin{eqnarray}
    a_0(x)&=&\frac{1}{W}\left(A_+(x)\int_x^{\infty} A_-(x')a_1^2(x')dx'
    + A_-(x)\int_{-\infty}^x A_+(x')a_1^2(x')dx' \right).
    \label{a_0}
\end{eqnarray}
where $A_\pm$ are the solutions of the eigenvalue problem
\begin{eqnarray}
\label{eq7}
    -\frac{d^2A_\pm}{dx^2}   + \left( V(x)-\Omega \right)A_\pm=0\,,
\end{eqnarray}
and $W=A_+(A_-)_x-(A_+)_x A_-$ is their Wronskian. Taking into account, that $\Omega$ is in a gap of the spectrum of Eq. (\ref{eq7}), we have that
$A_\pm(x)= e^{\pm \mu_{0}x} \phi_{\pm}(x)$ with $\phi_{\pm}(x)$ being real $2\pi$--periodic functions, and $\mu_0$ being the respective Floquet exponent. Thus, in the case at hand $W=-2\mu_0 \phi_{+}\phi_{-}+\phi_{+}(\phi_{-})_x-(\phi_{+})_x\phi_{-}$.

Using the fact that $\int_{-\infty}^\infty A_+(x)a_1^2(x)dx=0$ (this can be proved by substituting $a_1^2(x)$ from Eq.(\ref{eq5}) and integrating by part) the second term in (\ref{a_0}) can be rewritten in the same integration interval as the fist one
\begin{eqnarray}
    a_0(x)&=&\frac{1}{W}\left(A_+(x)\int_x^{\infty} A_-(x')a_1^2(x')dx'
    - A_-(x)\int_{x}^{\infty} A_+(x')a_1^2(x')dx' \right).
    \label{a_0_new}
\end{eqnarray}

\subsection{Numerical study of localized modes}

Now one can use the shooting  method to construct localized solutions of the system (\ref{eq_56}). As the first step one has to ensure that for a chosen structure the total gap $\Delta_g$ exists. As the next step for a chosen $\Omega\in \Delta_g$, the Floquet exponents $\mu_{0,1}$ and corresponding Bloch functions $\phi_1$ and $\phi_{\pm}$ must be computed.
Next, starting from some point $x=x_0$ far enough from the origin, where equation (\ref{eq6}) is effectively linear, one can approximate the function $a_1(x)$ on the interval $x_0<x<\infty$ by its linear asymptotics $A_1(x)$, which is determined by (\ref{A1_sol}) with some small initial amplitude $C_1$. Substituting $A_1(x)$ into equation (\ref{a_0_new}) and computing the integrals one has to obtain the value of the function $a_0(x_0)$ which now can be used to find $a_1(x)$ at the subsequent step of the spatial grid, $x=x_0-dx$ by solving Eq.(\ref{eq6}) (we use the Runge-Kutta method). Finally, by varying the shooting parameter $C_1$ one has to satisfy the condition $da_{0,1}/dx=0$ at the origin $x=0$ (recall that we are looking for even solutions).

Let us now show several possible solutions of the system (\ref{eq_56}).
For the numerical simulations we assume the specific form of the dielectric permittivity: $V(x)=-V_0\cos(2x)$.
In the left panel of  Fig.\ref{fig_1} an example of the band-gap structure of Eqs. (\ref{eq_56}) with a total gap (it is indicated by the shadowed region) is shown. In this particular case this is an overlap of the first gaps of each of the equation.
The profiles of the corresponding solutions for different $\Omega$ inside the total gap are shown in the right panels of Fig.\ref{fig_1}.

\begin{figure}
\vspace{0.5cm}
\epsfig{file=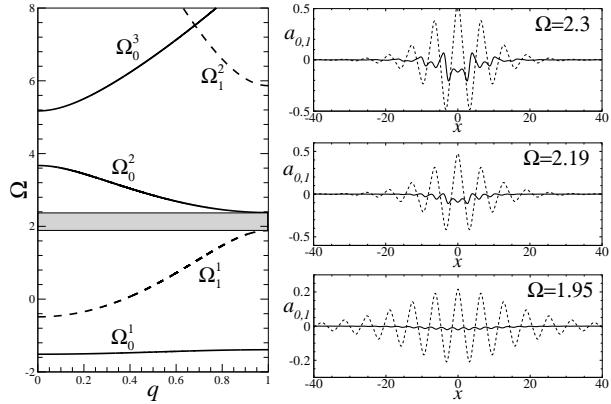,width=8cm}
\vspace{0.5cm}
\caption{In the left panel the band-gap spectra $\Omega^{(\nu)}_0(q)$ for the pump  (solid line) and $4\Omega^{(\nu)}_1(q)$ for the signal (dashed line) are shown.
$\nu$ determines the corresponding band number.
The shadowed domain shows the interval $\Delta_{g}=[1.89; 2.36]$ where the first gaps for the pump and the signal have overlapping and, consequently, where localized solutions of (\ref{eq_56}) can be found.
In the right panels the profiles of the localized solutions $a_{0}(x)$ (solid lines) and $a_1(x)$ (dashed lines) with $\Omega\in\Delta_g$. Here $V_0=4$.}
\label{fig_1}
\end{figure}

In all the pictures presented one observes that the amplitude of the signal is appreciably larger than the amplitude of the pump. Moreover, the amplitude of the mode itself (it is two-component in our case) decays and the pump amplitude become very small as the frequency approaches the bottom of the total gap. Notice that the top and bottom of the total gap coincide with the top of the gap for the pump signal and with the bottom of the gap for the signal. We also observe that the pattern of the pump is more sophisticated than the pattern of the signal component.

To investigate the stability of obtained solutions we integrate numerically the evolution equations  (\ref{eq_12}) taking $\Delta k=0$ and starting with initial profiles presented in the right panels of Fig.~\ref{fig_1}.
In Fig.~\ref{dyn} we show the propagation of the localized modes for three different $\Omega$. We observe that the modes excited close to the bottom of the total gap are  stable while   the solutions in the vicinity of the top of the gap display unstable behavior.

\begin{widetext}
\vspace{3cm}
\begin{figure}
\epsfig{file=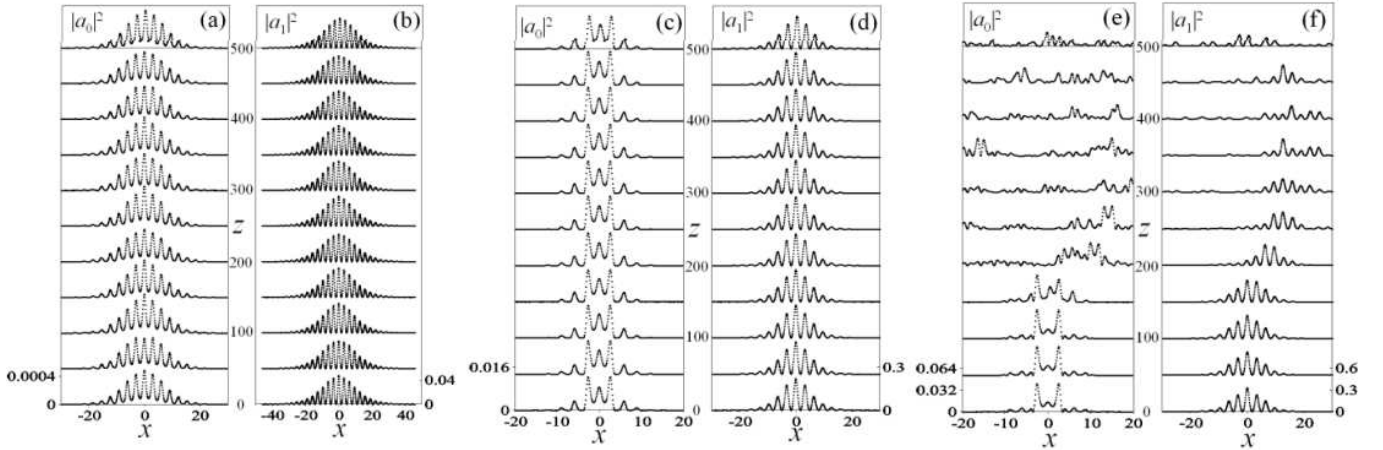,width=18cm}
\caption{Dynamics of the modes, whose initial profiles are taken from the right panel of Fig.~\ref{fig_1} and perturbed by 1\% of their amplitudes.
In [(a),(b)], [(c),(d)] and [(e),(f)] parameter  $\Omega=1.95; 2.19; 2.3$, respectively.}
\label{dyn}
\end{figure}
\end{widetext}

\section{Coherent structures in the optical parametric oscillator}

\subsection{The model}

Now we proceed with the study of possible coherent patterns in an optical parametric oscillator and as the first step we recall derivation of the model equations, following~\cite{Tlidi}.
We consider the simplified case where in transverse direction there exists dependence only on one variable [we call it $x$: $\tilde{\varepsilon}\equiv\tilde{\varepsilon}(x)$] and use the abbreviated notation $\alpha_{0,1}(z,\tau)=\alpha_{0,1}(z,x,\tau)$.
To this end we impose the boundary conditions on (\ref{eq_12}), (\ref{eps01}) assuming that the cavity is located on the interval $0<z<\ell$:
\begin{subequations}
\label{boundary12}
\begin{eqnarray}
\label{boundary1}
    \alpha_0\left(0,\tau\right)&=&\alpha_0^{in}+R_0e^{i\theta_0} \alpha_0(\ell,\tau-T),
    \\
    \label{boundary2}
    \alpha_1(0,\tau)&=&R_1e^{i\theta_1} \alpha_1(\ell,\tau-T)
\end{eqnarray}
\end{subequations}
and allowing the amplitude of the driven field to depend on the transverse coordinates $x$, i.e. $\alpha^{in}\equiv \alpha^{in}(x)$.
Here $T=\ell/c$ is the delay time with $c$ being the velocity of light, $\theta_{0,1}$ are the detuning parameters and $R_{0,1}$ are the reflectivity factors (in the following we will consider only the case of exact phase matching, $\Delta k=0$).
In the small amplitude limit, $|\alpha_{0,1}|\ll 1$, one can write down approximated solutions for pump and signal for the reduced Maxwell equations (\ref{eq_12}) as  the first order terms of the Mac-Laurin expansion
\begin{subequations}
\begin{eqnarray}
&&\alpha_0(\ell,\tau)= \alpha_0(0,\tau)+\left[\frac i2\partial_x^2\alpha_0(0,\tau)
+ i\tilde{\varepsilon}\alpha_0(0,\tau)+i\alpha_1^2(0,\tau)\right]\ell\,,
    \label{MacLau_1}\\
&&\alpha_1(\ell,\tau)=\alpha_1(0,\tau )+ \left[i\partial_x^2\alpha_1(0,\tau)
+\frac i2\tilde{\varepsilon}\alpha_1(0,\tau) +2i\alpha_0(0,\tau) \bar{\alpha}_1(0,\tau)\right]\ell\,.
\label{MacLau_2}
\end{eqnarray}
\end{subequations}
Assuming that the time variation of the solutions is very slow as compared to the delay
\begin{eqnarray}
T\p_\tau \alpha_{0,1}(0,\tau)= \alpha_{0,1}(0,\tau+T)-\alpha_{0,1}(0,\tau)
\label{d_tau}
\end{eqnarray}
and considering limits of large reflectivity $R_{0,1}\approx 1$ and small detuning $\theta_{0,1}\ll 1$
one comes to the model
\begin{eqnarray}
    &&\partial_\tau A_0=E(x)-(\gamma+i\Delta_0)A_0
       -\frac i2 {\cal L}_0
    A_0-A_1^2, \label{A1}
    \\
    &&\partial_\tau A_1=-(1+i\Delta_1)A_1
    -i {\cal L}_1A_1 +2A_0\bar{A}_1 \label{A2}
\end{eqnarray}
(see also Refs.~\cite{ZMT,BRCTT}, Refs.~\cite{Conti} and \cite{Egorov}, where the homogeneous counterpart of this system was derived respectively withing the framework of the multiple-scale expansion for excitations with smooth envelopes and as a continuum approximation for the cavity solitons in a periodic media, originally described by the coupled discrete equations).
To derive these equations one uses the renormalized fields
\begin{eqnarray}
\label{scale}
\frac{\ell}{1-R_1}\left(
i\alpha_0,\alpha_1, -\frac{V(x)}2 , \frac{i\alpha_0^{in}}{1-R_1}\right)\to\left(A_0,A_1,\tilde\varepsilon,E\right),
\end{eqnarray}
 rescaled time and space variables:
\begin{eqnarray}
\label{scale_tx}
\left(\frac{T}{1-R_1}\p_\tau, \frac{\ell}{1-R_1}\partial_x^2\right)\to\left(\p_\tau,\partial_x^2\right),
\end{eqnarray}
as well as the constants
$ \gamma=(1-R_0)/(1-R_1)$ and $\Delta_{0,1}=\theta_{0,1}/(R_1-1)$.
The linear operators ${\cal L}_{0,1}$ have been defined in (\ref{lin_op}).

\subsection{Nonlasing states}
\label{subsec:non-lasing_state}

These states correspond to $A_1=0$ and thus the respective $A_0$ solves the stationary ($\p_\tau=0$) equation
\begin{eqnarray}
\label{eq0}
    {\cal L}_0A_0^{(st)} + 2(\Delta_0-i\gamma)A_0^{(st)} + 2iE(x)=0.
\end{eqnarray}

Let us consider the driven field having the same periodicity as the medium, i.e. $E(x)=E(x+\pi)$.
Since the Bloch states constitute a complete orthonormal basis one can represent
\begin{eqnarray}
    E(x)=\sum_{\nu,q}e_{\nu q}\varphi_{\nu q}^{(0)}(x).
\end{eqnarray}
Looking for the respective stationary solution $A_0^{(st)}$ also in a form of the expansion over the Bloch functions one readily obtains:
\begin{eqnarray}
    A_0^{(st)}(x)=-2i\sum_{\nu,q}\frac{e_{\nu q}}{{\cal E}_{\nu q}^{(0)}+2(\Delta_0-i\gamma)}\varphi_{\nu q}^{(0)}(x).
\end{eqnarray}
The obtained field pattern has an interesting particular limit which occurs when $-2\Delta_0$ coincides with one of the band edges of the pump signal, i.e. when $-2\Delta_0={\cal E}_{\nu \pm}^{(0)}$. Then $A_0^{(st)}(x)\equiv E(x)/\gamma$, i.e. the pattern of the pump signal is the same as the pattern of the driven field (with accuracy of a factor $1/\gamma$).

\subsection{Lasing states, particular case 1}
\label{subsec:lasing_state1}

In the presence of a periodic potential of a general kind, finding  explicit solutions for lasing states even in the simplest statement where $A_1=$const seems to be an unsolvable problem. A progress however is possible for periodic functions $V(x)$ of a specific type. In order to show this we employ a kind of "inverse engineering"~\cite{BK} and find an explicit form of $V(x)$ assuming that $A_1$ is a given complex constant: $A_1=\rho e^{i\varphi}$  (here both $\rho$ and $\varphi$ are real constants). Then looking for  the pump filed in the form $A_0(x)=a_0(x)e^{2i\varphi}$, where $a_0(x)$ is a real function,  we obtain from (\ref{A2})
\begin{eqnarray}
\label{a00}
    a_0(x)=\frac12\left[1+i\Delta_1+\frac i4 V(x)\right]
\end{eqnarray}
which after substituting in (\ref{A1}) yields the system of two real equations:
\begin{eqnarray}
\label{V1}
    \Delta_0&+&\gamma\Delta_1+\frac{\gamma+2}{4}V(x)-2E_I=0,
    \\
    \label{V2}
    \frac{d^2V(x)}{dx^2}&-&2(\Delta_0+2\Delta_1)V(x)-V^2(x)\nonumber\\
    &+&8(\gamma-\Delta_0\Delta_1+2\rho^2-2E_R)=0
\end{eqnarray}
where $E_I=$Im$\left(Ee^{-2i\varphi}\right)$ and $E_R=$Re$\left(Ee^{-2i\varphi}\right)$. Eq. (\ref{V2}),  viewed as an equation with respect to $V(x)$, can be solved in quadratures:
\begin{eqnarray}
\label{x_V}
 x&=&\int_{0}^{V} \left[\frac 23 V^3+2(\Delta_0+2\Delta_1)V^2
 - 16(\gamma-\Delta_0\Delta_1+2\rho^2-2E_R)V\right]^{-1/2}dV
\end{eqnarray}
where without loss of generality we have assumed that $V(x)>0$ with $V(0)=0$ and $V_x(0)=0$, i.e. that $x=0$ is a local minimum of the potential. This last condition implies existence of the threshold for $E_R$, which now must satisfy the inequality $E_R>(\gamma-\Delta_0\Delta_1+2\rho^2)/2$.

The obtained expression   (\ref{x_V}) depends on many physical parameters. Therefore, to simplify the analysis, we narrow the class of the potentials considering a case example of a one-parametric family of the periodic functions $V(x)$:
\begin{eqnarray}
\label{elliptic}
        V(x)= V_0(k)\mbox{sn}^2\left( \beta(k) x, k\right),
\end{eqnarray}
where sn$(x,k)$ is a Jacobi elliptic function with the modulus $k$,   $\beta(k)=\frac 2\pi K(k)$,   $V_0(k)=6k^2\beta^2(k)$ is the potential amplitude parametrized by $k$, and $K(k)$ is a complete elliptic integral of the first kind.

We observe that in practical terms the profile described by  (\ref{elliptic}) is not too sophisticate: it can be reproduced with very high accuracy by only  two or three harmonics for the elliptic parameter $k$ not too close to one (see the discussion in \cite{BK}).

Assuming that the refractive index of the cavity has the profile (\ref{elliptic}), one finds that in order to support the stationary modes one has to apply the following form of the driven filed
\begin{eqnarray}
    \label{E}
        E(x)=e^{2i\varphi}\left[E_R+iE_{I0} +iE_{I1}\mbox{sn}^2(\beta x, k) \right]
\end{eqnarray}
where $E_{I0}=(\Delta_0+\gamma\Delta_1)/2$ and $E_{I1}=-(\gamma+2)V_0(k)/8$ are fixed by the periodic structure and the cavity properties, while $E_R$ is the control parameter which can be changed.
Then the field pattern is obtained from (\ref{a00}):
\begin{eqnarray}
\label{rho}
    \rho^2=-\frac34k^2\beta^4(k)-\frac12(\gamma-\Delta_0\Delta_1)+E_R,
    \\
    \label{a0}
    a_0(x)=\frac12\left[1+i\Delta_1+\frac i4 V_0\mbox{sn}^2(\beta (k)x, k)\right].
\end{eqnarray}
provided the elliptic modulus $k$ is expressed in terms of the detunings $\Delta_{0,1}$ through the implicit equation
\begin{eqnarray}
\label{k}
(1 + k^2) \beta^2(k)= -\Delta_1 - \Delta_0/2.
\end{eqnarray}
 The last equation means that $V_0$ and $\Delta_{0,1}$ are not independent parameters and have to be matched and that the solutions we are dealing with exists only if at least one detuning is negative, or more precisely if $ 2\Delta_1 + \Delta_0<0$.

 Now, taking into account that $\rho$ is a real constant one comes to the following threshold value $E_{thr}$ for the control parameter $E_R$
\begin{eqnarray}
\label{E_0}
    E_R>E_{thr}=\frac34k^2\beta^4(k)+\frac12(\gamma-\Delta_0\Delta_1).
\end{eqnarray}
As it is clear $E_{thr}$ corresponds to $\rho=0$.
In Fig.\ref{fig_E0_k} the dependence of the critical value of the driven field $E_{thr}$ on
the depth of modulation  of the refractive index, $V_0$,
is shown.
\begin{figure}
\epsfig{file=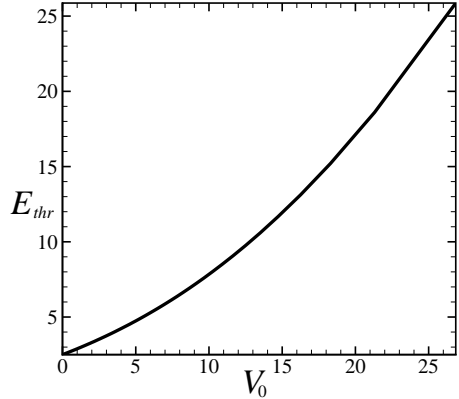,width=6cm}
\caption{The dependence $E_{thr}$ {\it vs} $V_0(k)$ for  $\gamma=1$ and $\Delta_1=1$.}
\vspace{1.5cm}
\label{fig_E0_k}
\end{figure}

In Fig.\ref{fig_2} we show several patterns for different values of the amplitude of modulation of the refractive index, $V_0$, and of the control parameter, $E_R$. Using initial dynamical equations (\ref{A1}), (\ref{A2}) we have checked that all the solutions presented are dynamically stable against initial small (of order of 5\% of the intensity) perturbations.

\begin{figure}
\epsfig{file=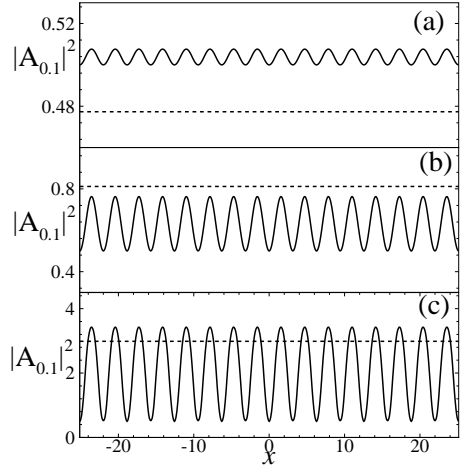,width=6cm}
\caption{Patterns of the pump $|A_0|^2$ (solid line) and signal $|A_1|^2$ (dashed line) waves determined by (\ref{rho}), (\ref{a0}) for different $V_0$ and $E_R$. In (a) $V_0=0.06$, $E_R=3$, $\Delta_0=-4.03$, in (b) $V_0=1.73$, $E_R=4$, $\Delta_0=-4.88$. in (c) $V_0=10.24$, $E_R=11$, $\Delta_0=-9.63$.
The other parameters are $\gamma=1$ and $\Delta_1=1$.}
\label{fig_2}
\end{figure}

\subsection{Lasing states, particular case 2}
\label{subsec:lasing_state2}

The approach, based on the inverse engineering and described in Sec.~\ref{subsec:lasing_state1}, can be generalized. Indeed, let us look for a plane wave solution of the form
\begin{eqnarray}
    A_0=\left[\frac12+i \tilde{A}_0(x)\right]e^{-2i\omega\tau},\quad A_1=\tilde{A}_1(x)e^{-i\omega\tau}
\end{eqnarray}
where $\omega$ is a frequency to be determined and $\tilde{A}_{0,1}(x)$ are real functions depending only on $x$ (c.f. (\ref{a00}) where $\tA_0=\Delta_1/2+V(x)/4$). Here we use the form of the potential as in the Sec.II~F, namely $V(x)=-V_0\cos(2x)$.

Let us now require $\tA_0$ to be a Bloch state of the linear eigenvalue problem
\begin{eqnarray}
\label{eq1:a0}
    \frac 12 {\cal L}_0\tA_0+\Delta_0\tA_0=2\omega \tA_0\,.
\end{eqnarray}
As it is clear $2\omega$ must be a frequency bordering a gap, i.e. there must be $\omega=\frac 14 {\cal E}_{\nu \pm}^{(0)}+\frac 12 \Delta_0$,  since otherwise the eigenvalue $\tA_0$ is complex. The amplitude of $\tA_0(x)$ is a free parameter, so far, because (\ref{eq1:a0}) is linear. Let us now fix it by requiring  the frequency $\omega$ to border a gap of another linear eigenvalue problem, which reads
\begin{eqnarray}
\label{eq1:a1}
    \left[ {\cal L}_1-2\tA_0(x)\right]\tA_1+\Delta_1\tA_1=\omega \tA_1\,.
\end{eqnarray}
Then  $\tA_1$ is a Bloch state of this new  problem where the effective periodic potential is given by  $\tilde V(x)=V(x)/4-2\tA_0(x)$. Designating the gap edges of the Hill operator (\ref{eq1:a1}) by $\tilde{\cal E}_{\nu \pm}^{(1)}$ (notice that they depend on the amplitude of $\tA_0(x)$), we deduce that the amplitude of the pump signal $\tA_0(x)$ is determined from the relation
\begin{eqnarray}
\label{cond}
\tilde{\cal E}_{\nu\pm}^{(1)}+\Delta_1=\frac 14 {\cal E}_{\nu \pm}^{(0)}+\frac 12 \Delta_0\,,
\end{eqnarray}
which can be solved numerically.
The system (\ref{A1}), (\ref{A2}) is satisfied if the complex field is chosen in the form
\begin{eqnarray}
    E(x)&=&\left[\frac \gamma 2 +\tA_1^2(x)
    + i\left(\frac 14 V(x) - \omega +\frac 12\Delta_0+ \gamma \tA_0(x)\right)\right]e^{-2i\omega\tau}.
    \label{E_x}
\end{eqnarray}

In Fig.\ref{fig_las_stII} we illustrate a band structure of (\ref{eq1:a1}) (panel a) as well as effective potential $\tilde V(x)$ (panel b) with $\tA_0(x)$ taken at the band edge ${\cal E}_{0+}^{(0)}$ and having the amplitude $1$.
The patterns of $\tA_1(x)$ corresponding to the three lowest band edges denoted as A, B and C and the respective profiles of driving field $|E(x)|^2$ are shown in Fig.~\ref{fig_las_stII} (c)-(h).

\begin{figure}
\epsfig{file=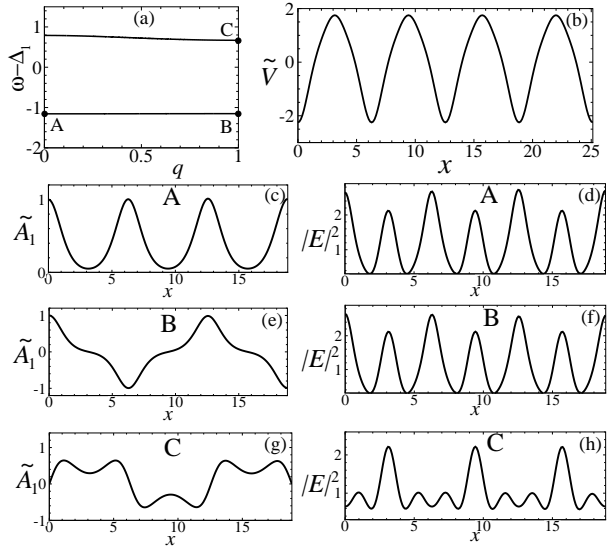,width=8cm}
\caption{In (a) the band structure of the linear eigenvalue problem (\ref{eq1:a1}) and
in (b) the profile of the corresponding effective potential, $\tilde V(x)$, are shown.
In [(c),(d)], [(e),(f)], and [(g),(h)] the solutions, $\tA_1(x)$, corresponding to the first three edges of the bands denoted by points A (the edge ${\cal E}^{(1)}_{0-}$), B (the edge ${\cal E}^{(1)}_{0+}$), and C (the edge ${\cal E}^{(1)}_{1-}$) as well as corresponding profiles of the intensity of the driving field $|E(x)|^2$ calculated from (\ref{E_x}) are shown. The profile of $\tA_0(x)$ is calculated at the upper edge of the first band ${\cal E}^{(0)}_{0+}$ of the linear eigenvalue problem (\ref{eq1:a0}).
The other parameters are $\gamma=1$ and $V_0=1$.}
\label{fig_las_stII}
\end{figure}

\section{Concluding remarks}

In this paper we have carried out analytical and numerical studies of
localized modes in optical parametric amplifiers and of periodic patterns in parametric
oscillators generated by a properly chosen driving field in presence of diffraction and transverse periodic modulations of the refractive index.

Using Bloch waves approach we have found that the presence of refractive
index modulations drastically affect the existence and properties of
localized structures in the both systems. In particular, spatial
modulations lead to the formation of transverse phase locked (real
fields) signal and pump in the form of a localized modes (gap solitons).
Conditions required on the amplitude of modulations and external pump field for
the existence of localized patterns in the parametric oscillators have been deduced.
We have checked the stability of the obtained solution
by means of integration of respective dynamical equations.
It as to be emphasized however that the thorough analysis of the stability of gap solitons, which is complicated by the necessity of numerical generation of exact gap solitons, is left for further studies. In the present work we also did not discuss gap solitons of high order gaps, whose existence is strongly constrained (and may be even inhibited) by the necessity of the overlapping of  the spectra of two components, i.e. of the first and second harmonics (notice that the respective constrains do not exist in the theory of the single-component solitons in Kerr media). And meantime, the method elaborated for the particular case, where signal and idle waves were equal, allows straightforward generalization to the situation where resonant interaction among three different waves occurs.

\acknowledgments

VAB was supported by the FCT grant SFRH/BPD/5632/2001. VVK
acknowledges support from Ministerio de Educaci\'on y Ciencia (MEC,
Spain) under the grant SAB2005-0195. The work of VAB and VVK was
supported by the FCT and European program FEDER under the grant
POCI/FIS/56237/2004. Cooperative work was supported by the bilateral
program Ac\c{c}\~ao Integrada Luso-Francesa. The IRCICA and CERLA
are supported in part by the ``Conseil R\'{e}gional Nord Pas de
Calais'' and the ''Fonds Europ\'{e}en de D\'{e}veloppement
Economique des R\'{e}gions''.

\end{document}